%% file: main.tex
\documentclass[12pt]{article}

\usepackage[a4paper,top=2cm,bottom=2cm,left=1.5cm,right=1.5cm,marginparwidth=1.75cm]{geometry}
\usepackage{amsmath}
\usepackage{amsfonts}
\usepackage{amssymb}
\usepackage{amsthm}
\usepackage{graphicx}
\usepackage{booktabs}
\usepackage{url}
\usepackage{authblk}
\usepackage{bm}
\usepackage[authoryear]{natbib}
\usepackage{setspace}

\input{def}

\newtheorem{theorem}{Theorem}

\title{Covariate-Adaptive Sample Size Re-estimation for Population-Standardized Historical Control Designs in Single-Arm Trials}
\author[1]{Keisuke Hanada\thanks{Department of Biostatistics, Wakayama Medical University, Kimiidera, Wakayama, 641-8509, Japan \quad E-mail: khanada@wakayama-med.ac.jp}}
\author[2]{Masahiro Kojima}
\affil[1]{Department of Biostatistics, Faculty of Medicine, Wakayama Medical University}
\affil[2]{Department of Data Science for Business Innovation, Chuo University, Tokyo, Japan}
\date{}

\begin{document}

\maketitle

\begin{abstract}
Externally controlled single-arm trials are increasingly considered when randomized controls are infeasible, but baseline imbalance between the active-arm trial and historical controls complicates both estimation and sample size planning. We propose a population-standardized design framework in which the target estimand is defined for the actually enrolled active-arm population and historical-control outcomes are standardized to that population through a pre-specified balancing score. Building on this estimand, we develop an outcome-blinded, covariate-adaptive sample size re-estimation (SSR) procedure that updates the required sample size using only accumulating baseline covariates, without using active-arm outcomes during enrollment.
The method combines an initial scenario-based design with sequential updates of the enrolled-population score distribution, standardized control parameters, and target sample size under pre-specified stopping rules. We give conditional and unconditional power interpretations and sufficient conditions for approximate type I error control under repeated blinded SSR. In simulation studies with distributional shifts between planned and true active-arm populations, fixed designs based only on planning assumptions lost power, whereas the proposed SSR maintained power near the target level and performed similarly to an oracle design. In an illustrative ADCS-based example, the proposed procedures yielded different final sample sizes across adjustment sets, reflecting evolving enrolled-population covariate profiles.
These results support covariate-adaptive, outcome-blinded SSR as a practical design strategy for externally controlled single-arm trials that target population-standardized treatment effects.
\end{abstract}

\noindent\textit{Key words and phrases:} historical control; inverse probability weighting; population-standardized estimand; sample size re-estimation; single-arm trial

\setstretch{1.3}

\section{Introduction}
\label{sec:intro}

Randomized controlled trials remain the standard design for evaluating treatment effects. In rare diseases, severe diseases, exploratory efficacy evaluations, or settings in which allocation to a concurrent control arm is ethically or practically difficult, however, it may be infeasible to establish an adequately sized randomized control group. Externally controlled or historical-control single-arm trials based on existing clinical trials, registries, electronic health records, or other external data sources can provide a practical option for efficacy evaluation in such settings \citep{pocock1976historical, viele2014historical, thorlund2020synthetic, seeger2020external}. The growing discussion of real-world data for regulatory decision making and the increasing use of external controls in oncology further underscore the methodological importance of externally controlled trial designs \citep{burcu2020rwe, mishraKalyani2022external}.

The main difficulty in externally controlled comparisons is that the active-arm trial and the external control group are not generated by the same randomization mechanism. Differences in patient characteristics, calendar time, measurement procedures, eligibility criteria, or follow-up may cause a simple difference in outcome means to reflect population or data-source differences as well as a treatment effect. Externally controlled single-arm trials therefore require a clear design-stage definition of the target population and of how external-control information is aligned with that population. Propensity scores are central tools for confounding adjustment in observational studies \citep{rosenbaumRubin1983}, and frameworks for external-control single-arm trials have been developed through G-computation, inverse probability of treatment weighting (IPTW), and doubly debiased machine learning \citep{loiseau2022external}. Standardization and transportability ideas for extending inference from one population to another also provide a foundation for using external controls to target the active-arm population \citep{dahabreh2020transport}.

Sample size determination poses additional challenges in trials using historical controls. Previous work has considered sample size calculations that incorporate the historical-control mean and its uncertainty \citep{makuch1980sample, zhang2010calculating}, sample size determination for covariate-adjusted historical-control comparisons \citep{o2002sample}, and design and analysis considerations when previous trials are used as control information \citep{schoenfeld2019design}. These studies provide an important basis for externally controlled trial design, but they typically focus on settings in which the sample size is fixed at trial initiation or the historical-control mean and variance are treated as planning values. When the external control is standardized to the active-arm population actually enrolled, the standardized control mean and its variance component depend on the active-arm covariate distribution. Thus, if the target population is defined as the enrolled active-arm population, sample size design should also reflect uncertainty about that target population.

Sample size re-estimation (SSR) has developed as an adaptive design approach that updates the required sample size using information accumulated during a trial. Methods that update nuisance parameters, such as the variance of a normal outcome, under blinding, and methods that extend a trial based on conditional power have been widely discussed, mainly in randomized controlled trials \citep{gouldShih1992, proschanHunsberger1995, gould2001ssr}. Recent work has also studied blinded monitoring of nuisance parameters and blinded SSR that accounts for uncertainty in interim variance estimates \citep{xuFriede2026monitoring, maeda2026blinded}. In hybrid-control designs that combine a current study with a historical study, IPW has been used to evaluate covariate imbalance and to re-estimate sample size under blinding according to that imbalance \citep{kojima2026hybrid}. In the single-arm externally controlled design considered here, the baseline covariate distribution observed during enrollment determines the target distribution to which the external control is standardized. The quantities to be updated are therefore not only the extent of covariate imbalance but also the standardized control parameter, the design contrast, and the variance component of the test statistic. This observation motivates an outcome-blinded SSR procedure that updates the required sample size using only enrolled baseline covariates, without using active-arm outcomes.

This paper makes three main contributions. First, Section~\ref{sec:estimands} defines a population-standardized estimand for externally controlled single-arm trials in which the target population is the actually enrolled active-arm population. Under this definition, historical-control outcomes are standardized to the active-arm score distribution through a pre-specified balancing score, and the control parameter $\eta_0(F)$ and corresponding variance component used for sample size design are explicit functions of the target distribution $F$. Second, Section~\ref{sec:ssr} develops an initial sample size design and a covariate-adaptive SSR procedure for this estimand, using only baseline covariate information observed during enrollment. This formulation places fixed designs based on planning distributions and the proposed design based on the observed enrollment distribution within a common notation. Third, Section~\ref{sec:theory} summarizes theoretical properties of repeated outcome-blinded SSR. Theorem~\ref{thm:type1} establishes approximate type I error control when the SSR rule does not use active-arm outcomes, and Theorem~\ref{thm:power} gives conditions under which power conditional on the final observed score distribution is at least the target level up to a normal-approximation error. Section~\ref{sec:numerical} first uses an illustrative real-data application to show the implementation of the standardized control parameter, IPTW weights, and review-specific sample size re-estimation. It then presents simulation studies comparing type I error, power, and final sample size of the proposed method with fixed, max-scenario fixed, and oracle designs when the planned and true active-arm distributions differ. Section~\ref{sec:discussion} discusses methodological implications, assumptions and limitations, and possible extensions to time-to-event outcomes and more complex external-control data structures.

\section{Population-Standardized Estimands in Externally Controlled Single-Arm Trials}
\label{sec:estimands}

\subsection{Data structure}

We consider a single-arm trial with an external control. Let $D_A=\{(Y_i,\X_i,Z_i): i=1,\ldots,n_A\}$ denote data from the active-arm trial and $D_H=\{(Y_j,\X_j,Z_j): j=1,\ldots,n_H\}$ denote historical-control data. Here, $Y$ is the observed outcome and $\X$ is a vector of baseline covariates measured before treatment. For each subject, let $Y^1$ and $Y^0$ denote the potential outcomes under active treatment and under control or standard treatment, respectively. Thus, $Y=Y^1$ is observed for active-arm subjects, whereas $Y=Y^0$ is observed for historical controls.

Let $G$ be a binary data-source indicator, where $G=1$ denotes membership in the active-arm single-arm trial and $G=0$ denotes membership in the historical-control data source. The variable $G$ is not a treatment indicator but an indicator of data source or trial participation; treatment states in the potential outcome notation are denoted by superscripts $a\in\{0,1\}$. In what follows, the subscript $A$ denotes the active-arm target population, whereas individual-level data-source membership is denoted by $G$. 
We let $Z=s(\X)$ be a pre-specified balancing score that summarizes baseline distributional differences between the active-arm trial and the historical controls. Typically, $s(\X)=P(G=1\mid \X)$, the conditional probability of participation in the active-arm trial, which we refer to as the trial participation score.
This use of a trial participation score parallels propensity score methods for generalizing results from randomized trials to a target population \citep{stuart2011generalizability}. In the present design, $s(\cdot)$ is pre-specified at the design stage, and sample size determination and re-estimation summarize the enrolled active-arm population through the distribution of $Z$. When IPTW standardization is used, historical-control subjects receive weights based on this balancing score. In particular, if $Z=s(\X)$ is a trial participation score, $Z/(1-Z)$ can be interpreted as an odds-type density-ratio weight that shifts the historical-control distribution toward the active-arm score distribution. Positivity, model specification, and covariate balance after weighting are important for weighting-based standardization \citep{coleHernan2008weights, austinStuart2015iptw}.

\subsection{Target population and estimand}

The target population consists of the subjects actually enrolled in the single-arm trial. Let $F_{\X,A}$ denote its baseline covariate distribution and $F_{Z,A}$ the corresponding trial participation score distribution. The estimand of interest is the difference between the outcome under active treatment in this active-arm population and the potential outcome that the same population would have experienced under control treatment. This formulation follows the estimand framework, which defines a treatment effect by specifying the treatment conditions, population, endpoint, and population-level summary \citep{ichE9R1}. The estimand is not an average effect in the entire external-control population; rather, it is close in spirit to an average treatment effect on the treated for an external-control analysis because it uses the counterfactual control outcome in the actually enrolled active-arm population.

Let $\eta_a(F_{Z,A})$ denote an endpoint-specific outcome functional standardized to the active-arm score distribution, where $a=1$ denotes active treatment and $a=0$ denotes control treatment. For example, $\eta_a(F_{Z,A})$ may be a mean, an event proportion, or a survival probability at time $t$. We define the treatment-effect estimand as
\[
\Delta(F_{Z,A})
=
\eta_1(F_{Z,A})-\eta_0(F_{Z,A}).
\]
For risk ratios, odds ratios, or other measures, $\Delta(F_{Z,A})$ can be defined on the corresponding transformation scale as $\Delta(F_{Z,A})=h\{\eta_1(F_{Z,A}),\eta_0(F_{Z,A})\}$. Under this definition, the target population is the enrolled active-arm population. Historical-control information is therefore standardized to the active-arm score distribution $F_{Z,A}$. This interpretation requires, at a minimum, that conditional on the observed baseline covariates or score, historical-control outcomes under control treatment are representative of the control potential outcomes in the active-arm target population. It also requires sufficient overlap over the score region represented in the active-arm target population. If these conditions are not met, the standardized control parameter may not adequately represent the control potential outcome in the target population.

\subsection{Standardized control parameters for sample size design}

For sample size design, the control component $\eta_0(F_{Z,A})$ is particularly important. It is the outcome characteristic expected if the active-arm population had received control treatment, and it serves as the control-side planning value.

Suppose that a conditional control parameter $\theta_0(z)$ can be estimated from the historical-control data on the endpoint-specific scale used for standardization. Depending on the endpoint, $\theta_0(z)$ may represent a mean, event probability, survival probability at a fixed time, restricted mean survival time, variance, or another outcome characteristic. The standardized control parameter for the active-arm score distribution can then be written as
\[
\eta_0(F_{Z,A})
=
\int \theta_0(z)\,dF_{Z,A}(z).
\]

Full standardization based on baseline covariates $\X$ can be defined by expressing the conditional parameter as a function of $\X=\x$. In practical sample size design and interim re-estimation, however, it is often difficult to handle the full distribution of high-dimensional $\X$ stably. We therefore use $\eta_0(F_{Z,A})$ based on the pre-specified score $Z=s(\X)$ as the main design parameter. This design-stage approximation implements target-population standardization through a pre-specified low-dimensional score distribution \citep{stuart2011generalizability, dahabreh2020transport}. In sample size design, both the standardized control parameter $\eta_0(F_{Z,A})$ and the variance of its estimator $\hat\eta_0(F_{Z,A})$ may depend on the target distribution. Accordingly, the sample size formula below uses the variance of the treatment-effect estimator $\hat\Delta(F)=\hat\eta_1-\hat\eta_0(F)$.

\subsection{A simple illustration of target-population-dependent sample size}

We illustrate how the target covariate distribution affects the standardized control parameter and required sample size using a simple continuous-outcome example. Let $X$ be binary, with $X\mid G=0\sim \mathrm{Bernoulli}(p_H)$ in the historical controls and $X\mid G=1\sim \mathrm{Bernoulli}(p_A)$ in the enrolled active-arm population.

If the control outcome depends on $X$, the standardized control mean $\eta_0(p_A)$ changes with the active-arm target distribution. Consequently, even when the active-arm design value $\eta_1$ is fixed, the design contrast $\Delta_{\mathrm{design}}(p_A)=\eta_1-\eta_0(p_A)$ and the required sample size depend on the covariate distribution of the active-arm population that is actually enrolled.

Figure~\ref{fig:simple-illustration} shows the design contrast and required sample size as $p=P(X=1)$ varies in the active-arm target population when $p_H=0.5$. An unadjusted design fixes the control mean at the value implied by the historical-control distribution, so the required sample size is constant. In contrast, a target-standardized design changes $\eta_0(p)$ as $p$ changes, and consequently changes both the design contrast and the required sample size. Details of this simple illustration are provided in Web Appendix A.

\begin{figure}[ht]
    \centering
    \includegraphics[width=0.9\linewidth]{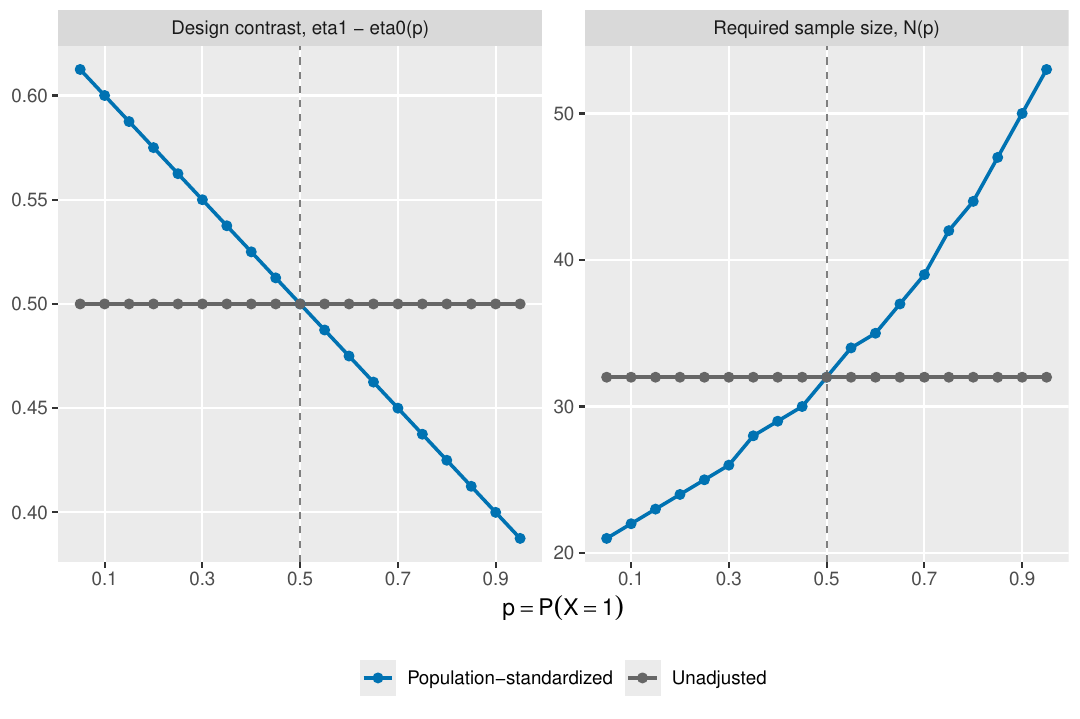}
    \caption{
    Simple illustration of how the active-arm target covariate distribution affects the standardized control mean, the design contrast, and the required sample size.
    The solid vertical line indicates $p_H=0.5$.
    }
    \label{fig:simple-illustration}
\end{figure}

The key point is that this change can be assessed from the baseline covariate distribution of enrolled subjects without using active-arm outcomes. Therefore, in externally controlled single-arm trials, it is natural not only to design the sample size under a fixed planning population but also to update the target population used for sample size design using baseline covariates observed during enrollment. We generalize this idea by sequentially updating the standardized control parameter $\eta_0(F_{Z,A})$ and required sample size based on the distribution $F_{Z,A}$ of the pre-specified score $Z$.

\section{Sample Size Determination and Covariate-Adaptive Re-Estimation}
\label{sec:ssr}

\subsection{Initial design}
\label{sec:initial-design}

At the planning stage, the final active-arm baseline covariate distribution $F_{X,A}$ and score distribution $F_{Z,A}$ are unknown. The initial design therefore pre-specifies an expected active-arm score distribution using previous clinical trials, registries, natural history studies, eligibility criteria, and clinical knowledge. We denote this planning score distribution by $F_{Z,A}^{\mathrm{plan}}$. In a standard setting, one may assume that the active-arm population has a background distribution similar to that of the historical controls and set $F_{Z,A}^{\mathrm{plan}}=F_{Z,H}$. If the active-arm score distribution may differ from the historical-control distribution because of eligibility criteria, sites, calendar time, care environment, or disease severity, multiple planning scenarios can be specified.

Under the planning score distribution $F_{Z,A}^{\mathrm{plan}}$, the standardized control parameter is $\eta_0(F_{Z,A}^{\mathrm{plan}})$. This is the outcome characteristic expected if the planned active-arm population had received control treatment and serves as the control-side reference value for sample size determination.

We consider a normal approximation test as the basic sample size design. Determining sample size from the design contrast, variance components, significance level, and target power under a normal approximation is widely used in clinical trial sample size planning \citep{chow2017sample}. Let $F$ denote a generic target score distribution and define the treatment-effect estimator as
\[
\hat\Delta(F)=\hat\eta_1-\hat\eta_0(F).
\]
Here, $\eta_1$ in the sample size design is the active-arm design value under the alternative hypothesis; it is not re-estimated from active-arm outcomes for SSR during enrollment. In the final analysis, $\hat\eta_1$ is estimated from active-arm outcomes. We assume that the design value $\eta_1$ does not depend on the score distribution $F$, whereas the control parameter $\eta_0(F)$ depends on the target score distribution. Then $\Delta(F)=\eta_1-\eta_0(F)$. For a fixed target distribution $F$, the variance of the treatment-effect estimator used in the final analysis can be written as
\[
\mathrm{Var}\{\hat\Delta(F)\}
=
\mathrm{Var}\{\hat\eta_1\}
+
\mathrm{Var}\{\hat\eta_0(F)\}.
\]
We approximate the active-arm outcome variance by $\mathrm{Var}\{\hat\eta_1\}=\sigma_1^2/N$. Let $\tau_0^2(F)=\mathrm{Var}\{\hat\eta_0(F)\}$ denote the variance of the standardized control parameter estimator. Thus,
\[
\mathrm{Var}\{\hat\Delta(F)\}
=
\frac{\sigma_1^2}{N}+\tau_0^2(F).
\]
This $\tau_0^2(F)$ is a design-stage variance component representing the uncertainty in estimating $\eta_0(F)$ from historical controls for a fixed target distribution $F$. The finite-sample uncertainty of $\hat F_{Z,A,n}$ at a review time is not included in this basic variance expression; in the conservative SSR procedure below, it is handled as uncertainty in the re-estimated required sample size.
With IPTW, if the target score distribution $F$ moves away from the historical-control score distribution, poor overlap and increased weight variability can increase $\tau_0^2(F)$.

Based on this variance expression, define the standardized statistic used for sample size design and theoretical arguments as
\[
T_N(F)
=
\frac{
\hat\Delta(F)
}{
\sqrt{\sigma_1^2/N+\tau_0^2(F)}
},
\]
and let its mean under the alternative be
\[
\mu_N(F)
=
\frac{\Delta(F)}
{\sqrt{\sigma_1^2/N+\tau_0^2(F)}}.
\]
The actual final analysis uses the corresponding plug-in test statistic with estimated standardized control parameters and standard errors. The sample size design below assumes that $T_N(F)-\mu_N(F)$ is approximately standard normal. To achieve two-sided level $\alpha$ and target power $1-\beta$ for testing $H_0:\Delta(F)=0$ against $H_1:\Delta(F)\neq0$, $N$ is chosen so that
\[
\frac{|\eta_1-\eta_0(F)|}
{
\sqrt{\sigma_1^2/N+\tau_0^2(F)}
}
=
|\mu_N(F)|
\geq
z_{1-\alpha/2}+z_{1-\beta}.
\]
Therefore, the required sample size is
\begin{equation}
\label{eq-NF}
N(F)
=
\left\lceil
\frac{
\sigma_1^2
}{
\{\eta_1-\eta_0(F)\}^2/
\{z_{1-\alpha/2}+z_{1-\beta}\}^2
-
\tau_0^2(F)
}
\right\rceil .
\end{equation}
If the denominator is not positive, the pre-specified maximum sample size $N_{\max}$ is used. For a one-sided test, $z_{1-\alpha/2}$ is replaced by $z_{1-\alpha}$. The planning-stage sample size is then $N_{\mathrm{plan}}=N(F_{Z,A}^{\mathrm{plan}})$.

To account for uncertainty about the active-arm score distribution at the planning stage, we may also specify candidate distributions $\mathcal{F}_{Z,A}=\{F_{Z,A}^{(1)},\ldots,F_{Z,A}^{(K)}\}$. For each candidate distribution, compute $N_k=N(F_{Z,A}^{(k)})$, and define $N_{\min}=\min_{k=1,\ldots,K}N_k$ and $N_{\max}=\max_{k=1,\ldots,K}N_k$. A conservative fixed design may use $N_{\max}$, whereas a more aggressive design may use $N_{\min}$.

\subsection{Outcome-blinded covariate-adaptive SSR procedure}
\label{sec:ssr-procedure}

During a single-arm trial, baseline covariates $X$ in the active arm are usually observed earlier than outcomes. The proposed method is an outcome-blinded SSR procedure that updates the standardization target using only the score distribution of enrolled subjects and recalculates the required sample size using Equation~\eqref{eq-NF}. Thus, differences between the planning score distribution and the actual enrolled population can be reflected in the sample size design without using unblinded treatment-effect information.

The implementation can be summarized as follows.

\begin{enumerate}
    \item \textbf{Pre-specification.}
    Pre-specify the planning sample size $N_{\mathrm{plan}}$, the lower bound $N_{\min}$ and maximum sample size $N_{\max}$ allowed for the target sample size, the minimum review size $n_0$, the set of review times $\mathcal{M}$, and whether re-estimation uses the point estimate or a conservative upper bound.

    \item \textbf{Update of the score distribution.}
    At each review time $n\in\mathcal{M}$, calculate $Z_i=s(\X_i)$ from the baseline covariates of enrolled subjects and obtain the empirical distribution $\hat F_{Z,A,n}$.

    \item \textbf{Standardization and recalculation of required sample size.}
    Standardize the historical-control information to $F=\hat F_{Z,A,n}$ and calculate $N_{\mathrm{re}}(n)=N(\hat F_{Z,A,n})$ using $N(F)$ from the previous subsection. In the point-estimate SSR, set $N_{\mathrm{re}}^{\dagger}(n)=N_{\mathrm{re}}(n)$. In the conservative SSR, set $N_{\mathrm{re}}^{\dagger}(n)$ to the upper limit of the uncertainty interval described below.

    \item \textbf{Target sample size and stopping decision.}
    Update
    \[
    N_{\mathrm{target}}(n)
    =
    \min\left\{
    N_{\max},
    \max\left(N_{\min},N_{\mathrm{re}}^{\dagger}(n)\right)
    \right\}.
    \]
    If sample size reduction is not allowed, use $N_{\mathrm{plan}}$ instead of $N_{\min}$ as the lower bound. If $n\geq N_{\mathrm{target}}(n)$, stop enrollment and set $N_{\mathrm{final}}=n$. Otherwise, continue enrollment until the next review time; enrollment also stops at $n=N_{\max}$.
\end{enumerate}

Here, $\mathcal{M}\subseteq\{n_0,n_0+1,\ldots,N_{\max}\}$ is the review-time set, and $n_0$ is the minimum review size used to avoid highly unstable estimation of the score distribution. In the extreme case, one can set $\mathcal{M}=\{n_0,n_0+1,\ldots,N_{\max}\}$ and update the required sample size after every enrolled subject.

The conservative SSR reflects the finite-sample uncertainty of $\hat F_{Z,A,n}$ while respecting the positivity and right skewness of sample size. We estimate the standard error of $\log N(\hat F_{Z,A,n})$ using the delta method. Let $\widehat{\mathrm{se}}\{\log N_{\mathrm{re}}(n)\}$ denote this estimated standard error. An uncertainty interval for the required sample size is constructed as
\[
\left[
\exp\left\{\log N_{\mathrm{re}}(n)-z_{1-\gamma/2}\widehat{\mathrm{se}}(\log N_{\mathrm{re}}(n))\right\},
\exp\left\{\log N_{\mathrm{re}}(n)+z_{1-\gamma/2}\widehat{\mathrm{se}}(\log N_{\mathrm{re}}(n))\right\}
\right],
\]
where $\gamma$ is the significance level for the uncertainty interval. Working on the log scale ensures that both endpoints are positive. Let the upper endpoint be $N_{\mathrm{re}}^{\mathrm{U}}(n)$; the conservative SSR uses $N_{\mathrm{re}}^{\dagger}(n)=N_{\mathrm{re}}^{\mathrm{U}}(n)$.

\subsection{Theoretical properties of outcome-blinded re-estimation}
\label{sec:theory}

This subsection summarizes theoretical properties of the proposed covariate-adaptive SSR. The key feature is that sample size re-estimation and the stopping rule are based only on the score process constructed from baseline covariates of enrolled subjects, not on active-arm outcomes. This feature makes the procedure outcome-blinded and supports approximate type I error control under suitable conditions. The idea is consistent with internal pilot designs and blinded SSR, where nuisance parameters or design information are updated without using unblinded treatment-effect information \citep{gouldShih1992, friedeKieser2006internal, gould2001ssr}. We also provide sufficient conditions under which the power conditional on the score distribution observed at stopping is approximately at least the target power.

Let $F_{Z,A}^{\mathrm{true}}$ denote the true score distribution in the active-arm target population, and consider the true-target null hypothesis $H_0^{\mathrm{true}}:\Delta(F_{Z,A}^{\mathrm{true}})=0$. Let $\mathcal{G}_n=\sigma(Z_1,\ldots,Z_n)$ be the information generated by the scores observed up to review time $n$. At each review time $n\in\mathcal{M}$, $\hat F_{Z,A,n}$, $N_{\mathrm{re}}(n)$, and $N_{\mathrm{target}}(n)$ are updated based on $\mathcal{G}_n$. Define $N_{\mathrm{final}}=\inf\{n\in\mathcal{M}: n\geq N_{\mathrm{target}}(n)\}$, with $N_{\mathrm{final}}=N_{\max}$ if this set is empty.

We assume the following conditions.
\begin{enumerate}
    \item[(A1)] Sample size re-estimation, target sample size updating, and the enrollment stopping rule depend only on the score process $\{Z_i\}$ and not on active-arm outcomes $\{Y_i\}$.
    \item[(A2)] The final efficacy test is conducted only once after enrollment is completed; no interim efficacy test or rejection decision is conducted.
    \item[(A3)] Under $H_0^{\mathrm{true}}$, there exists a nonnegative $\mathcal{G}_n$-measurable discrepancy term $\rho_n$ such that, conditional on any fixed sample size $n$ and fixed score sequence $Z_1,\ldots,Z_n$, the final-analysis test based on $\hat F_{Z,A,n}$ has rejection probability at most $\alpha+\rho_n$.
    \item[(A4)] Under the alternative hypothesis, for a fixed score distribution $F$, $T_N(F)-\mu_N(F)$ is approximated by a standard normal distribution. That is, there exists $\varepsilon_N(F)\geq0$ such that
    $\sup_{t\in\mathbb{R}}
    \left|
    P\{T_N(F)-\mu_N(F)\leq t\mid F\}
    -
    \Phi(t)
    \right|
    \leq
    \varepsilon_N(F)$.
    \item[(A5)] The target sample size used at stopping is at least the oracle required sample size for the final score distribution:
    \[
    N_{\mathrm{target}}(N_{\mathrm{final}})
    \geq
    \frac{
    \sigma_1^2
    }{
    \Delta(\hat F_{Z,A,N_{\mathrm{final}}})^2/
    \{z_{1-\alpha/2}+z_{1-\beta}\}^2
    -
    \tau_0^2(\hat F_{Z,A,N_{\mathrm{final}}})
    },
    \]
    and the denominator on the right-hand side is positive.
\end{enumerate}

Assumptions (A1)--(A2) separate sample size updating from efficacy evaluation: the adaptive rule is determined by the score process, and efficacy is tested only once at the final analysis. This corresponds to the framework of internal pilot designs and blinded SSR, where nuisance parameters or design information are updated without using unblinded treatment-effect information \citep{proschanLanWittes2006monitoring, friedeKieser2006internal}. Assumption (A3) allows the conditional rejection probability under the true-target null to differ from $\alpha$ by $\rho_n$, because the final test uses the realized empirical score distribution $\hat F_{Z,A,n}$ rather than $F_{Z,A}^{\mathrm{true}}$. In practice, $\rho_n$ is expected to be small when $\hat F_{Z,A,n}$ is close to $F_{Z,A}^{\mathrm{true}}$ and the final-analysis standard error appropriately reflects uncertainty in the active-arm mean, the standardized control parameter estimated from historical controls, and, if needed, estimation of the score distribution.

Assumptions (A4)--(A5) are sufficient conditions used to evaluate power conditional on the standardized target distribution $F$ within the present estimand and sample size formula. The quantity $\varepsilon_N(F)$ in (A4) represents the error in the normal approximation and is expected to converge to zero when the approximation is asymptotically justified. Assumption (A5) is an oracle condition involving the unknown $\Delta(F)$ and $\tau_0^2(F)$; it is intended to hold approximately for point-estimate SSR and more conservatively for the conservative SSR.

Under these assumptions, the proposed outcome-blinded SSR controls the type I error up to the discrepancy term in (A3), and conditional power is at least the target level up to the normal-approximation error.

\begin{theorem}[Approximate type I error control]
\label{thm:type1}
Suppose that (A1)--(A3) hold. Then the final test based on the proposed outcome-blinded SSR satisfies
\[
P_{H_0^{\mathrm{true}}}\{\text{reject }H_0^{\mathrm{true}}\}
\leq
\alpha+E_{H_0^{\mathrm{true}}}\{\rho_{N_{\mathrm{final}}}\}
\]
under the true-target null hypothesis. In particular, if $E_{H_0^{\mathrm{true}}}\{\rho_{N_{\mathrm{final}}}\}\to0$, the procedure asymptotically controls type I error at level $\alpha$.
\end{theorem}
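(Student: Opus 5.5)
The plan is to condition on the score process and use the tower property. The stopping time $N_{\mathrm{final}}$ is defined from $N_{\mathrm{target}}(n)$, $n\in\mathcal{M}$. Each $N_{\mathrm{target}}(n)$ is $\mathcal{G}_n$-measurable by (A1), so the event $\{N_{\mathrm{final}}=n\}$ is determined by $Z_1,\ldots,Z_n$ alone. In particular $\{N_{\mathrm{final}}=n\}\in\mathcal{G}_n$, so $N_{\mathrm{final}}$ is a stopping time with respect to $\{\mathcal{G}_n\}$. Since $\mathcal{M}$ is a finite subset of $\{n_0,\ldots,N_{\max}\}$ and $N_{\mathrm{final}}=N_{\max}$ when no review triggers stopping, $N_{\mathrm{final}}$ takes finitely many values. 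This lets us decompose the rejection event as
\[
P_{H_0^{\mathrm{true}}}\{\text{reject}\}=\sum_{n} E_{H_0^{\mathrm{true}}}\bigl[\mathbf{1}\{N_{\mathrm{final}}=n\}\,P\{\text{reject}\mid N_{\mathrm{final}}=n, Z_1,\ldots,Z_n\}\bigr].
\]

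The key step is to identify the inner conditional probability with the fixed-$n$ rejection probability in (A3). By (A2), rejection is decided only once, by the final test computed from the first $n$ subjects' data and $\hat F_{Z,A,n}$, and no earlier test decision enters. By (A1), the stopping rule uses no outcome information. Therefore, conditional on $(Z_1,\ldots,Z_n)$, the event $\{N_{\mathrm{final}}=n\}$ is fixed, either certain or impossible. It carries no information about the active-arm outcomes $Y_1,\ldots,Y_n$ or the historical data beyond what $(Z_1,\ldots,Z_n)$ already conveys. I will make this precise by noting that the stopping indicator is a deterministic function of $(Z_1,\ldots,Z_n)$. Conditioning on it in addition to the score sequence thus changes nothing, and the conditional law of the test statistic given $\{N_{\mathrm{final}}=n\}\cap\{Z_{1:n}\}$ equals its law given $Z_{1:n}$ at fixed sample size $n$.

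Applying (A3) then bounds the inner probability by $\alpha+\rho_n$ on $\{N_{\mathrm{final}}=n\}$, where $\rho_n$ is $\mathcal{G}_n$-measurable. Summing over $n$ gives
\[
\sum_n E[\mathbf{1}\{N_{\mathrm{final}}=n\}(\alpha+\rho_n)]=\alpha+E\{\rho_{N_{\mathrm{final}}}\},
\]
and the asymptotic statement follows immediately by letting $E\{\rho_{N_{\mathrm{final}}}\}\to0$.

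The main obstacle is the conditioning step: we must justify that the sample-size choice does not alter the conditional distribution of the outcomes given the scores. One subtlety is that the subjects enrolled after the $n$th do not affect the test, since enrollment stopped at $n$. The other is that (A3) is stated for a fixed $n$ and a fixed score sequence, so it applies pathwise. I would state explicitly that under (A1) the outcome process is independent of the stopping decision conditional on $\mathcal{G}_n$, since the decision is $\mathcal{G}_n$-measurable. If outcomes were used, this conditional independence would fail, which is precisely what (A1) excludes.
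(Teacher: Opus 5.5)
Your proposal is correct and follows the paper's proof essentially step for step. (A1) makes $N_{\mathrm{final}}$ a stopping time for $\{\mathcal{G}_n\}$ with $\{N_{\mathrm{final}}=n\}\in\mathcal{G}_n$; the type I error is then decomposed as $\sum_n E_{H_0^{\mathrm{true}}}[I(N_{\mathrm{final}}=n)P_{H_0^{\mathrm{true}}}(R_n\mid\mathcal{G}_n)]$, where $R_n$ is the fixed-$n$ rejection event, and (A3) bounds each conditional probability by $\alpha+\rho_n$. The conditioning step you flag as the main obstacle is immediate once $\{N_{\mathrm{final}}=n\}\in\mathcal{G}_n$, as you note yourself: it is just the defining property of conditional expectation, with no separate conditional-independence argument needed, and that is exactly how the paper treats it.
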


\begin{theorem}[Conditional power given the enrolled score distribution]
\label{thm:power}
Suppose that (A1)--(A2) and (A4)--(A5) hold, and that at stopping $N_{\mathrm{final}}\geq N_{\mathrm{target}}(N_{\mathrm{final}})$.
Then the power conditional on the final observed score distribution $\hat F_{Z,A,N_{\mathrm{final}}}$ satisfies
\[
P\{\text{reject }H_0^{\mathrm{true}}
\mid
\hat F_{Z,A,N_{\mathrm{final}}},N_{\mathrm{final}}\}
\geq
1-\beta
-
\varepsilon_{N_{\mathrm{final}}}(\hat F_{Z,A,N_{\mathrm{final}}}).
\]
\end{theorem}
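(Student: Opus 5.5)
Since $N_{\mathrm{final}}$ and $\hat F_{Z,A,N_{\mathrm{final}}}$ are functions of the score process only (A1), we work conditionally on $(\hat F,N)$ with $\hat F=\hat F_{Z,A,N_{\mathrm{final}}}$ and $N=N_{\mathrm{final}}$. The strategy is to reduce the stopped problem to a fixed-$(F,N)$ power calculation, show that the mean shift $\mu_N(\hat F)$ is at least $z_{1-\alpha/2}+z_{1-\beta}$, and then invoke the normal approximation (A4).

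\textbf{Step 1 (conditioning).} Because the stopping rule is $\mathcal{G}_n$-measurable and does not look at $\{Y_i\}$, conditioning on the event $\{N_{\mathrm{final}}=n,\ \hat F_{Z,A,n}=F\}$ leaves the conditional law of the outcomes given the scores unchanged. By (A2) there is a single final test. Hence the conditional rejection probability equals the rejection probability of the fixed-sample test with sample size $n$ and target $F$. This is the same optional-stopping-free argument that underlies Theorem~\ref{thm:type1}.

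\textbf{Step 2 (mean shift bound).} Let $D=\Delta(\hat F)^2/(z_{1-\alpha/2}+z_{1-\beta})^2-\tau_0^2(\hat F)$, which is positive by (A5). Combining the hypothesis $N\ge N_{\mathrm{target}}(N)$ with (A5) gives
\[
N\ \ge\ \sigma_1^2/D .
\]
Since $D>0$, this rearranges to
\[
\frac{\sigma_1^2}{N}+\tau_0^2(\hat F)\ \le\ \frac{\Delta(\hat F)^2}{(z_{1-\alpha/2}+z_{1-\beta})^2},
\]
so that $|\mu_N(\hat F)|\ge z_{1-\alpha/2}+z_{1-\beta}$. Note that positivity of $D$ forces $\Delta(\hat F)\neq 0$. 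By symmetry we may assume $\mu_N(\hat F)>0$.

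\textbf{Step 3 (normal approximation).} The rejection event contains $\{T_N(\hat F)>z_{1-\alpha/2}\}$, whose complement is $\{T_N-\mu_N\le z_{1-\alpha/2}-\mu_N\}$. By (A4) this complement has conditional probability at most
\[
\Phi(z_{1-\alpha/2}-\mu_N)+\varepsilon_N(\hat F)\ \le\ \Phi(-z_{1-\beta})+\varepsilon_N(\hat F)=\beta+\varepsilon_N(\hat F).
\]
Taking complements yields the bound stated in Theorem~\ref{thm:power}.

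\textbf{Main obstacle.} The delicate step is Step 1, together with the identification of the test that is actually performed. The theorem concerns rejection of $H_0^{\mathrm{true}}$, while the statistic and the power calculation are built on $\hat F$. We will therefore interpret the final test as the one based on $T_N(\hat F)$, consistent with (A3) and (A4), and (A4) is applied at the data-dependent $F=\hat F$. This requires (A4) to hold conditionally on $F$ in the sense stated, i.e.\ given the realized score sequence. The justification is again (A1): the outcomes are not used in forming $\hat F$ or $N$, so conditioning on them does not distort the distribution of $T_N$.
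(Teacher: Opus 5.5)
Your proposal is correct and follows essentially the same route as the paper's proof. Both combine the stopping condition $N_{\mathrm{final}}\ge N_{\mathrm{target}}(N_{\mathrm{final}})$ with (A5) to place $N_{\mathrm{final}}$ above the oracle sample size at $\hat F_{Z,A,N_{\mathrm{final}}}$, deduce $\mu_N(\hat F)\ge z_{1-\alpha/2}+z_{1-\beta}$, and apply (A4) to the event $\{T_N>z_{1-\alpha/2}\}$. Your Step 1 only makes explicit, via (A1), the conditioning on the stopped score distribution, which the paper invokes with the single line ``we condition on this score distribution.''
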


Theorem~\ref{thm:type1} shows that, even after repeated sample size re-estimation, the level of the final test is preserved up to the discrepancy induced by using the empirical score distribution, as long as re-estimation does not depend on outcomes. Theorem~\ref{thm:power} shows that, when the final observed active-arm score distribution is treated as the target population, securing a sufficient sample size for that population yields conditional power at least at the target level, up to the normal-approximation error. The pre-trial unconditional power can be interpreted as
$E_Z\left[P\{\text{reject }H_0^{\mathrm{true}}\mid\hat F_{Z,A,N_{\mathrm{final}}},N_{\mathrm{final}}\}\right]$, that is, an average over the score process. Thus, in the proposed framework, the initial design evaluates unconditional performance over candidate score distributions, whereas the ongoing trial updates the conditional required sample size based on the observed $\hat F_{Z,A,n}$. Proofs of Theorems~\ref{thm:type1} and \ref{thm:power} are given in Web Appendix B.

\section{Numerical Studies}
\label{sec:numerical}

\subsection{Illustrative application}

This subsection illustrates the implementation of the proposed population-standardized design and outcome-blinded SSR using ADCS data. The goal is not to evaluate operating characteristics, but to show how historical controls are standardized to the enrolled active-arm population and how the required sample size is updated at scheduled reviews. Type I error and power are evaluated in the simulation study in the next subsection.

The control arm of ADC-016 was regarded as a historical-control cohort, whereas the treatment arm of ADC-027 was regarded as a hypothetical active-arm single-arm trial. ADC-016 was a randomized trial of high-dose B vitamin supplementation in Alzheimer disease, and ADC-027 was a randomized trial of docosahexaenoic acid supplementation in Alzheimer disease \citep{aisen2008bvitamin, quinn2010dha}. ADCS trial data have also been used to illustrate methods for incorporating historical controls in clinical trials, including Bayesian borrowing approaches for longitudinal outcomes \citep{qi2022historical}. During sample size re-estimation, outcomes in the ADC-027 treatment arm were treated as masked, and only accumulated baseline covariates were used to update the target distribution, standardized historical-control mean, and required sample size.

Let $F$ denote a generic target score distribution. From the trial participation score model corresponding to $F$, let $\hat Z_j^H(F)$ be the estimated score for historical-control subject $j$, and define the IPTW weight
\[
w_j(F)=\frac{\hat Z_j^H(F)}{1-\hat Z_j^H(F)}.
\]
The standardized control mean is then estimated by
\[
\hat\eta_0(F)
=
\frac{\sum_{j\in H}w_j(F)Y_j^0}
{\sum_{j\in H}w_j(F)}.
\]
The unadjusted fixed design performs no standardization and uses the historical-control sample mean as the control parameter. The population-standardized fixed design standardizes to the score distribution based on active-arm covariate data assumed to be available at planning and fixes the sample size at trial initiation. In the proposed SSR designs, the trial participation score model is re-estimated at each review time using all historical controls and the baseline covariates of active-arm subjects enrolled up to that review. We then set $F=\hat F_{Z,A,n}$ and update $\hat\eta_0(F)$ and the required sample size. The point-estimate version uses $N_{\mathrm{re}}(n)$, whereas the conservative version uses an upper limit that accounts for estimation uncertainty. In the simulation study, we also include an oracle design in which the true active-arm distribution is known at planning and a max-scenario fixed design that uses the largest required sample size over the candidate distribution set. The numerical designs are summarized in Table~\ref{tab:simulation-designs}.

\begin{table}[ht]
\small
\centering
\caption{Designs compared in the numerical studies.}
\label{tab:simulation-designs}
\begin{tabular}{p{0.28\linewidth}p{0.24\linewidth}p{0.28\linewidth}}
\toprule
Design & Target used for $w_j(F)$ & Sample size rule \\
\midrule
Unadjusted fixed &
No weighting &
Fixed at planning \\
Population-standardized fixed &
$F_{Z,A}^{\mathrm{plan}}$ &
Fixed at planning \\
Oracle &
$F_{Z,A}^{\mathrm{true}}$ &
Fixed oracle benchmark \\
Max-scenario fixed &
Each $F\in\mathcal{F}_{Z,A}$ &
Fixed at $\max_{F\in\mathcal{F}_{Z,A}}N(F)$ \\
Proposed SSR, point estimate &
$\hat F_{Z,A,n}$ &
Update with $N_{\mathrm{re}}^{\dagger}(n)=N_{\mathrm{re}}(n)$ \\
Proposed SSR, conservative &
$\hat F_{Z,A,n}$ &
Update with $N_{\mathrm{re}}^{\dagger}(n)=N_{\mathrm{re}}^{\mathrm{U}}(n)$ \\
\bottomrule
\end{tabular}
\end{table}

Case 1 and Case 2 used different covariate sets to estimate the trial participation score by logistic regression. For this sample size design demonstration, the design value of the active-arm mean outcome was set to $\eta_1=35$. Details of the model structures, review times, and ADCS data summary are provided in Web Appendix C.

Figure~\ref{fig:real-data-ps} shows the trial participation score distributions at review times, and Table~\ref{tab:real_data_design} shows the final sample sizes. The unadjusted fixed design did not account for covariate differences and therefore gave the same sample size, 49, in Cases 1 and 2. In contrast, the population-standardized fixed design gave 43 subjects in Case 1 and 74 subjects in Case 2, indicating that the standardized control parameter and required sample size changed with the adjustment set. This fixed design, however, uses final active-arm covariate information that would not be available before trial initiation and is therefore best viewed as a reference method rather than an implementable design. The proposed SSR with the point estimate gave 42 subjects in Case 1 and 60 subjects in Case 2, whereas the conservative version gave 46 subjects in Case 1 and 95 subjects in Case 2. These results illustrate how baseline covariate information accumulated during enrollment can affect the standardized control parameter and updated sample size. In Case 2, the larger adjustment set made early sample size estimates more unstable, but the final sample size reflected the target population observed in the trial.

\begin{figure}[ht]
    \centering
    \includegraphics[width=1.0\linewidth]{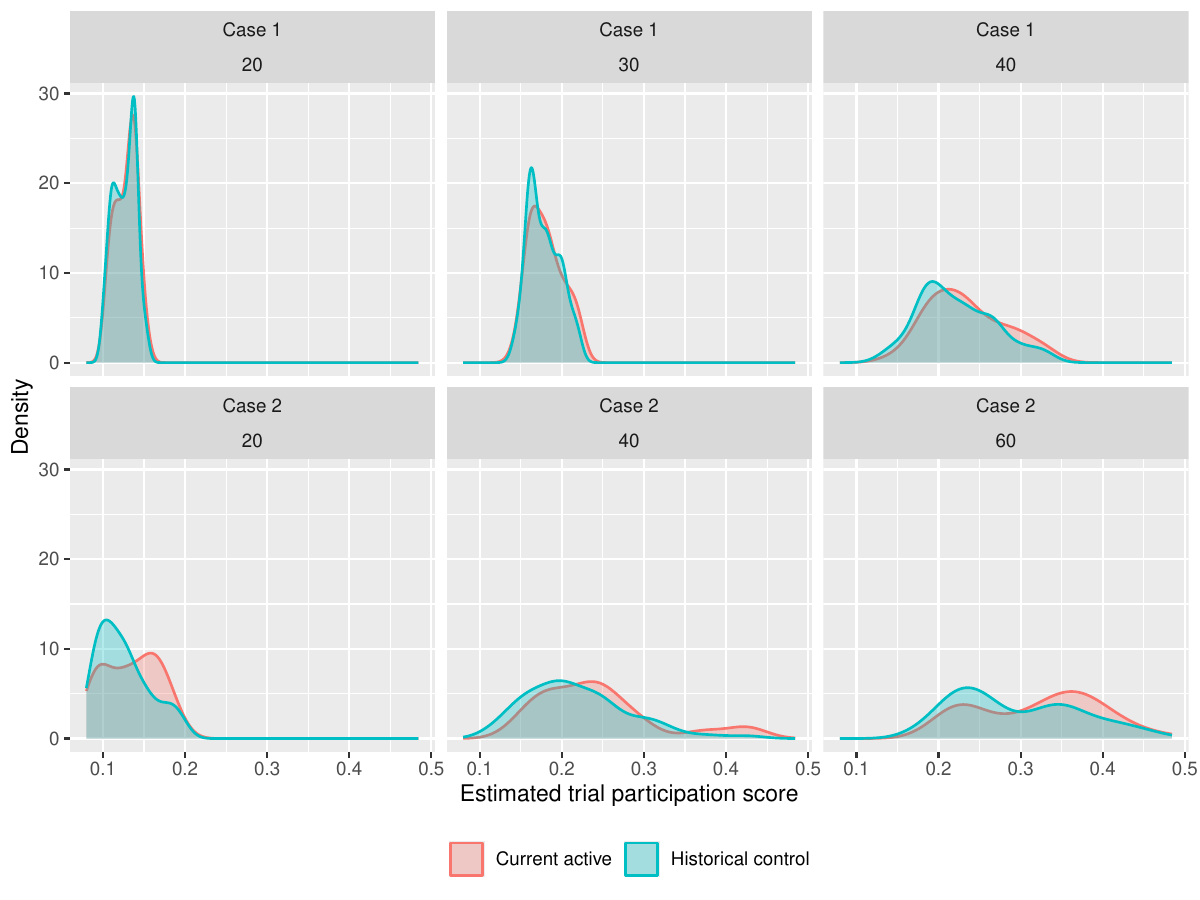}
    \caption{
    Distributions of the estimated trial participation scores in the real-data example.
    }
    \label{fig:real-data-ps}
\end{figure}

\input{real_data_design_table}

\subsection{Simulation study}

The simulation study was designed to evaluate whether covariate-adaptive SSR maintains operating characteristics when the active-arm enrollment distribution differs from the planning distribution. We focused on type I error under the true-target null hypothesis, power under target-population alternatives, and the final sample size required to recover power as the enrolled score distribution shifts.

\paragraph{Model settings}

The simulation study evaluates sample size re-estimation driven by the active-arm covariate distribution and the induced trial participation score distribution, rather than the estimation method for the standardized control parameter itself.

Historical-control and active-arm covariates are generated from $X\mid G=0\sim N(0,1)$ and $X\mid G=1\sim N(\mu_A,1)$, respectively. In the main analysis, $\mu_A\in\{0,0.5,1.0,1.5,2.0\}$. At the planning stage, the two groups are assumed to have the same background distribution, so $F_{X,A}^{\mathrm{plan}}=N(0,1)$.

Let $G=1$ denote membership in the active-arm trial and $G=0$ denote membership in the historical-control group. The trial participation score $Z=s(X)=P(G=1\mid X)$ is estimated by logistic regression. In the main analysis, the control outcome is generated from $Y^0\mid Z=z\sim N\{\theta_0(z),1\}$ with the linear conditional control mean
\[
\theta_0(z)=a_0+a_1z,
\]
where $(a_0,a_1)=(-1,1)$. The control mean standardized to the true active-arm score distribution is $\eta_0(F_{Z,A}^{\mathrm{true}})=\int \theta_0(z)\,dF_{Z,A}^{\mathrm{true}}(z)$. Additional simulations using quadratic and threshold models are provided in Web Appendix D.

For type I error evaluation, active-arm outcomes are generated from $Y^1\sim N(\eta_1^{\mathrm{null}},1)$, where $\eta_1^{\mathrm{null}}=\eta_0(F_{Z,A}^{\mathrm{true}})$ so that the null hypothesis holds in the target population. For power evaluation, the true mean difference in the target population is fixed at $\Delta_{\mathrm{alt}}=0.5$, and active-arm outcomes are generated from $Y^1\sim N(\eta_1^{\mathrm{alt}},1)$ with $\eta_1^{\mathrm{alt}}=\eta_0(F_{Z,A}^{\mathrm{true}})+\Delta_{\mathrm{alt}}$. The historical-control sample size is $n_H=500$, the two-sided significance level is $\alpha=0.05$, and the target power is $1-\beta=0.80$. Required sample sizes are calculated by the normal approximation and rounded up. The first review is at $n_0=20$, with review-time set $\mathcal{M}=\{20,30,\ldots,N_{\max}\}$; sample size is re-estimated every 10 subjects after 20 subjects have been enrolled. Each scenario uses 10,000 simulation replicates. The max-scenario fixed design uses candidate distributions corresponding to the candidate values of $\mu_A$.

The standardized control mean and required sample size are calculated using the IPTW estimator described in the previous subsection and Equation~\eqref{eq-NF}. The population-standardized fixed design uses $F=F_{Z,A}^{\mathrm{plan}}$, whereas the oracle design uses $F=F_{Z,A}^{\mathrm{true}}$. In the sequential SSR procedures, the trial participation score model is re-estimated at each review time $n$ using all historical controls and the baseline covariates of active-arm subjects enrolled up to that review. We then set $F=\hat F_{Z,A,n}$ and update the standardized control mean and required sample size.

\paragraph{Comparator designs}

The simulation study compares the six designs summarized in Table~\ref{tab:simulation-designs}. All designs use $\sigma_1^2=1$ and calculate the required sample size from Equation~\eqref{eq-NF}. In the two proposed SSR designs, the standardized control mean and required sample size are updated at each review time, and enrollment stops when $n\geq N_{\mathrm{target}}(n)$. For each design, we evaluate type I error, empirical power, and the mean and standard deviation of the final sample size.

\paragraph{Results}

Table~\ref{tab:sim-linear} shows the operating characteristics under the main linear outcome model. The proposed covariate-adaptive SSR maintained power in the target population close to the nominal target even when the active-arm covariate distribution deviated from the planning assumption. In contrast, the population-standardized fixed design based only on the planning distribution $F_{X,A}^{\mathrm{plan}}=N(0,1)$ showed decreasing power as $\mu_A$ increased. This pattern occurred because the standardized control parameter and required sample size used at planning became inadequate for the actual target population when the true active-arm distribution was $N(\mu_A,1)$.

In the linear model, when $\mu_A=0$, the active-arm and historical-control background distributions were the same; all methods approximately preserved type I error and achieved power near the target. As $\mu_A$ increased, power decreased for the population-standardized fixed design and fell below the target at $\mu_A=1.0$. In contrast, the proposed SSR increased the sample size according to the observed active-arm covariate distribution and recovered power. The point-estimate version maintained target power with a mean final sample size close to that of the oracle design.

The unadjusted fixed design showed high rejection rates in some settings, but this should not be interpreted as high power. When background distributions differ, the historical-control sample mean does not estimate $\eta_0(F_{Z,A}^{\mathrm{true}})$, and the test is conducted against a reference that differs from the null hypothesis in the target population. Thus, a high rejection rate can reflect misspecification of the control parameter due to background imbalance.

The max-scenario fixed design achieved high power but also had a large mean final sample size. By updating sample size according to the observed covariate distribution, the proposed methods maintained target power with final sample sizes close to the oracle design and gave a better balance between sample size and power than the fixed conservative design.

Overall, the simulations indicate that covariate-adaptive SSR can maintain power for population-standardized treatment effects when the active-arm enrollment distribution deviates from planning assumptions. Because re-estimation uses only the score distribution estimated from baseline covariates and not active-arm outcomes, type I error was approximately preserved. Additional results for the quadratic and threshold models are shown in Web Appendix D.

\input{simulation_table_linear}

\section{Discussion}
\label{sec:discussion}

We proposed a population-standardized estimand for externally controlled single-arm trials in which the target population is the active-arm population actually enrolled in the trial, and developed a corresponding framework for sample size determination and covariate-adaptive sample size re-estimation. In this framework, control outcomes from historical controls are standardized to the active-arm score distribution through a pre-specified balancing score. The resulting sample size calculation uses the control parameter $\eta_0(F)$ and the variance component corresponding to the target active-arm population, rather than a fixed external-control mean.

The central feature of the proposed method is that sample size re-estimation does not use active-arm outcomes. During the trial, only the score distribution obtained from baseline covariates of enrolled subjects is updated; this distribution is then used to recalculate the standardized control parameter, the design contrast, and the required sample size. The method can therefore be implemented as outcome-blinded SSR and, as shown in Theorem~\ref{thm:type1}, controls type I error up to the discrepancy induced by using the empirical score distribution under appropriate conditions. Theorem~\ref{thm:power} further shows that when a sufficient sample size is secured conditional on the final observed score distribution, the target power is approximately achieved. The illustrative application showed that the adjustment set and evolving score distribution during enrollment can affect the standardized control parameter and final sample size. In simulations, fixed designs could lose power when the active-arm distribution assumed at planning differed from the true enrollment distribution, whereas the proposed SSR updated the sample size according to the observed enrollment distribution and performed close to the oracle design.

This study has several limitations. The framework assumes that the observed baseline covariates or balancing score sufficiently account for comparability between the active-arm and historical-control populations. If unmeasured confounding, calendar-time differences, measurement differences, or endpoint-definition differences remain, $\eta_0(F)$ may not represent the control potential outcome in the target population. Model misspecification of the conditional control parameter $\theta_0(z)$, the balancing-score model, or the IPTW weights, as well as insufficient overlap, can affect the standardized control parameter, the variance component $\tau_0^2(F)$, and the required sample size. We also treated the active-arm design value $\eta_1$ as independent of the target score distribution and used a normal approximation for the test statistic. Designs that relax these assumptions may be needed when treatment effects are heterogeneous by covariates, sample sizes are small, weights are extreme, outcomes are non-normal, or missingness and censoring are complex.

Several extensions are natural. One is application to hybrid-control designs. Although this paper focused on a single-arm externally controlled design in which all control information comes from historical controls, designs that combine a small concurrent control group with external controls must jointly address target-population standardization, borrowing of external information, and sample size re-estimation \citep{kojima2026hybrid}. A second extension is to tests that do not rely on the normal approximation. For binary, ordinal, time-to-event, or small-sample settings, the required sample size conditional on the observed score distribution could be updated numerically using exact tests, permutation tests, bootstrap methods, randomization-based inference, or simulation-based power evaluation. Finally, in line with the ICH E9(R1) estimand framework, it will be important to study sample size determination and SSR not only for the population-standardized estimand considered here but also under other estimand strategies, such as principal stratum or hypothetical strategies \citep{ichE9R1}.

In summary, this study links target-population definition, standardization of external controls, sample size determination, and blinded sample size re-estimation in a unified framework for externally controlled single-arm trials. The proposed method updates the target population from the baseline covariate distribution observed during enrollment, without using active-arm outcomes, and re-evaluates the required sample size for that target population. This provides a practical basis for addressing estimand and target-population uncertainty at the design stage, rather than relying only on covariate adjustment at the analysis stage, in trials using external controls.

\section*{Acknowledgments}
This work was supported by JSPS KAKENHI (Grant numbers JP26K21184 and JP26K21185).

\section*{Data Availability}
The data that support the findings of this study are publicly available from the Alzheimer's Disease Cooperative Study at UC San Diego (\url{https://www.adcs.org/}).

\bibliographystyle{biom}
\bibliography{main}

\clearpage

\begin{center}
\LARGE
Supplementary Material for\\
Covariate-Adaptive Sample Size Re-estimation for Population-Standardized Historical Control Designs in Single-Arm Trials
\end{center}

\begin{center}
\large
Keisuke Hanada and Masahiro Kojima
\end{center}

\appendix
\setcounter{figure}{0}
\setcounter{table}{0}
\setcounter{equation}{0}
\renewcommand{\thefigure}{S\arabic{figure}}
\renewcommand{\thetable}{S\arabic{table}}
\renewcommand{\theequation}{S\arabic{equation}}

This supplementary material contains four components. Appendix A provides details of the simple illustration of target-population-dependent sample size determination. Appendix B gives proofs of Theorems 1 and 2. Appendix C describes the ADCS data, trial participation score models, and review times used in the real-data example. Appendix D presents additional simulation settings and results for the quadratic and threshold outcome models.

\section{Details of the Simple Illustration}

In the simple illustration in the main text, let $X$ be a binary covariate with $X\mid G=0\sim \mathrm{Bernoulli}(p_H)$ in the historical controls and $X\mid G=1\sim \mathrm{Bernoulli}(p_A)$ in the actually enrolled active-arm population. Consider the control outcome model
\[
Y^0=0.25(X-p_H)+\epsilon,\qquad \epsilon\sim N(0,1).
\]
For a target population with $P(X=1)=p$, the standardized control mean is
\[
\eta_0(p)=0.25(p-p_H).
\]
Thus, if the active-arm covariate distribution is the same as the historical-control distribution, that is, $p_A=p_H$, then $\eta_0(p_A)=0$. If the actually enrolled active-arm population has $p_A\neq p_H$, the target-specific control mean becomes $\eta_0(p_A)=0.25(p_A-p_H)$.

Suppose that the active-arm mean under the design alternative is fixed at $\eta_1=0.5$. The design contrast for target population $p$ is
\[
\Delta_{\mathrm{design}}(p)
=
\eta_1-\eta_0(p)
=
0.5-0.25(p-p_H).
\]
This quantity is not re-estimated during the trial using active-arm outcomes; it is the difference between the active-arm mean specified at the design stage and the control mean standardized to the target population.

For simplicity, we ignore uncertainty in the standardized control mean and approximate the variance of the active-arm mean by $1/N$, so that $\mathrm{Var}\{\hat\Delta(p)\}=1/N$. The required sample size based on the normal approximation for two-sided level $\alpha$ and target power $1-\beta$ is
\[
N(p)=
\left\lceil
\frac{\{z_{1-\alpha/2}+z_{1-\beta}\}^2}
{\{0.5-0.25(p-p_H)\}^2}
\right\rceil .
\]
Figure 1 in the main text shows $\Delta_{\mathrm{design}}(p)$ and $N(p)$ as $p=P(X=1)$ varies when $p_H=0.5$. If the planning assumption is $p_A=p_H=0.5$, the standardized control mean is 0, the design contrast is 0.5, and the required sample size is 32. If the actually enrolled active-arm population has $p_A=0.8$, the standardized control mean becomes 0.075 and the design contrast decreases to 0.425, yielding a required sample size of 44. Thus, even when the active-arm mean is fixed as a planning value, the required sample size can change when the target-specific control mean changes.

\section{Proofs}

\begin{proof}[Proof of Theorem 1]
Let $\mathcal{G}_n=\sigma(Z_1,\ldots,Z_n)$ be the information generated by the scores observed up to review time $n$. By (A1), sample size re-estimation and the stopping rule are based only on $\mathcal{G}_n$ and do not depend on active-arm outcomes. Therefore, the final sample size $N_{\mathrm{final}}$ is a stopping time with respect to $\{\mathcal{G}_n\}$.

Let the rejection event for a final analysis conducted at a fixed sample size $n$ be
\[
R_n
=
\left\{
\left|
\frac{
\hat\Delta(\hat F_{Z,A,n})
}{
\widehat{\mathrm{se}}\{\hat\Delta(\hat F_{Z,A,n})\}
}
\right|
>
z_{1-\alpha/2}
\right\}.
\]
For a one-sided test, the corresponding one-sided rejection region is used. By (A3), conditional on a fixed $n$ and a fixed score sequence, the final test has rejection probability at most $\alpha+\rho_n$ under $H_0^{\mathrm{true}}$:
\[
P_{H_0^{\mathrm{true}}}(R_n\mid \mathcal{G}_n)\leq \alpha+\rho_n
\]
for all review times $n$.

Because $N_{\mathrm{final}}$ is determined only by the score process under the proposed method,
\[
\{N_{\mathrm{final}}=n\}\in\mathcal{G}_n.
\]
The overall type I error probability is therefore
\[
\begin{aligned}
P_{H_0^{\mathrm{true}}}\{\text{reject }H_0^{\mathrm{true}}\}
&=
P_{H_0^{\mathrm{true}}}(R_{N_{\mathrm{final}}})\\
&=
\sum_{n\in\mathcal{M}\cup\{N_{\max}\}}
P_{H_0^{\mathrm{true}}}(R_n, N_{\mathrm{final}}=n)\\
&=
\sum_{n\in\mathcal{M}\cup\{N_{\max}\}}
E_{H_0^{\mathrm{true}}}\left[
I(N_{\mathrm{final}}=n)
P_{H_0^{\mathrm{true}}}(R_n\mid \mathcal{G}_n)
\right]\\
&\leq
\sum_{n\in\mathcal{M}\cup\{N_{\max}\}}
E_{H_0^{\mathrm{true}}}\left[
I(N_{\mathrm{final}}=n)(\alpha+\rho_n)
\right]\\
&=
\alpha+E_{H_0^{\mathrm{true}}}\{\rho_{N_{\mathrm{final}}}\}.
\end{aligned}
\]
Thus, if sample size re-estimation and the stopping rule do not depend on outcomes and the conditional rejection probability satisfies (A3), the proposed method controls type I error up to the expected discrepancy term at the stopping time. If $E_{H_0^{\mathrm{true}}}\{\rho_{N_{\mathrm{final}}}\}\to0$, this gives asymptotic level-$\alpha$ control. This proves Theorem 1.
\end{proof}

\begin{proof}[Proof of Theorem 2]
Let $N_{\mathrm{final}}$ be the final sample size and let $\hat F_{Z,A,N_{\mathrm{final}}}$ be the final observed score distribution. We condition on this score distribution.

For a generic target score distribution $F$, define the treatment effect as
\[
\Delta(F)=\eta_1-\eta_0(F)
\]
and the final-analysis estimator as
\[
\hat\Delta(F)=\hat\eta_1-\hat\eta_0(F).
\]
By (A4), for $T_N(F)$ and $\mu_N(F)$,
\[
\sup_{t\in\mathbb{R}}
\left|
P\{T_N(F)-\mu_N(F)\leq t\mid F\}
-
\Phi(t)
\right|
\leq
\varepsilon_N(F).
\]
Consider the case in which active treatment is beneficial in the direction of increasing the outcome, so that $\Delta(F)>0$. The opposite direction is handled analogously by reversing signs.

For a two-sided level-$\alpha$ test, consider the rejection region
\[
\left\{
T_N(F)>z_{1-\alpha/2}
\right\}.
\]
The power conditional on $F$ is
\[
\begin{aligned}
\mathrm{Power}(F,N)
&=
P\left\{
T_N(F)>z_{1-\alpha/2}
\mid F
\right\} \\
&=
P\left\{
T_N(F)-\mu_N(F)>z_{1-\alpha/2}-\mu_N(F)
\mid F
\right\}\\
&\geq
1-\Phi\{z_{1-\alpha/2}-\mu_N(F)\}
-
\varepsilon_N(F).
\end{aligned}
\]
Therefore, if
\[
N
\geq
\frac{
\sigma_1^2
}{
\Delta(F)^2/
\{z_{1-\alpha/2}+z_{1-\beta}\}^2
-
\tau_0^2(F)
}
\]
and the denominator is positive, then
\[
\mu_N(F)-z_{1-\alpha/2}
\geq
z_{1-\beta},
\]
and hence
\[
\mathrm{Power}(F,N)
\geq
\Phi(z_{1-\beta})
-
\varepsilon_N(F)
=
1-\beta-\varepsilon_N(F).
\]
For a one-sided test, replace $z_{1-\alpha/2}$ by $z_{1-\alpha}$.

In the proposed method, $N_{\mathrm{target}}(n)$ is calculated at each review time $n$ based on the observed score distribution $\hat F_{Z,A,n}$. By (A5), at the stopping time the target sample size actually used is at least the oracle required sample size:
\[
N_{\mathrm{target}}(N_{\mathrm{final}})
\geq
\frac{
\sigma_1^2
}{
\Delta(\hat F_{Z,A,N_{\mathrm{final}}})^2/
\{z_{1-\alpha/2}+z_{1-\beta}\}^2
-
\tau_0^2(\hat F_{Z,A,N_{\mathrm{final}}})
}.
\]
The stopping rule also gives
\[
N_{\mathrm{final}}
\geq
N_{\mathrm{target}}(N_{\mathrm{final}}).
\]
Therefore,
\[
N_{\mathrm{final}}
\geq
\frac{
\sigma_1^2
}{
\Delta(\hat F_{Z,A,N_{\mathrm{final}}})^2/
\{z_{1-\alpha/2}+z_{1-\beta}\}^2
-
\tau_0^2(\hat F_{Z,A,N_{\mathrm{final}}})
}.
\]
Setting $F=\hat F_{Z,A,N_{\mathrm{final}}}$, the preceding power calculation yields
\[
P\{\text{reject }H_0^{\mathrm{true}}
\mid
\hat F_{Z,A,N_{\mathrm{final}}},N_{\mathrm{final}}\}
\geq
1-\beta
-
\varepsilon_{N_{\mathrm{final}}}(\hat F_{Z,A,N_{\mathrm{final}}}).
\]
This proves Theorem 2.

In practice, $\Delta(F)$, $\sigma_1^2$, and $\tau_0^2(F)$ are unknown and are estimated from the historical-control data and the score distribution observed at interim reviews. A power guarantee therefore requires the estimated required sample size not to fall below the oracle required sample size with sufficient probability. For example, defining $N_{\mathrm{re}}^{\dagger}(n)$ as the upper limit of an uncertainty interval for $N_{\mathrm{re}}(n)$ gives a conservative sample size re-estimation procedure that accounts for uncertainty in estimating the score distribution and makes the above condition easier to satisfy.
\end{proof}

\section{Details of the Real-Data Example}

ADC-016 was a randomized trial of high-dose B vitamin supplementation in Alzheimer disease, and ADC-027 was a randomized trial of docosahexaenoic acid supplementation in Alzheimer disease \citep{aisen2008bvitamin, quinn2010dha}. In the real-data example, the placebo arm of ADC-016 was used as the historical-control cohort, and the DHA arm of ADC-027 was used as a hypothetical active-arm single-arm trial. Table~\ref{tab:adcs-summary} summarizes the data.

The trial participation score was defined as $Z=P(G=1\mid \X)$, where $G=1$ denotes membership in the current active-arm trial and $G=0$ denotes membership in the historical-control cohort. Case 1 used sex and education and estimated the score with the logistic regression model
\[
\mathrm{logit}\{P(G=1\mid \X)\}
=
\gamma_0+\gamma_1\,\mathrm{Male}+\gamma_2\,\mathrm{Education}.
\]
Case 2 used age, sex, education, and baseline MMSE:
\[
\mathrm{logit}\{P(G=1\mid \X)\}
=
\gamma_0+\gamma_1\,\mathrm{Age}
+\gamma_2\,\mathrm{Male}
+\gamma_3\,\mathrm{Education}
+\gamma_4\,\mathrm{MMSE}.
\]
At each review time, the score model was re-estimated using all historical-control data and only the baseline covariates of active-arm subjects enrolled up to that review. Case 1 considered reviews at 20, 30, and 40 subjects; Case 2 considered reviews at 20, 40, 60, and 80 subjects. No active-arm outcomes were used for score estimation or sample size re-estimation. In the population-standardized fixed design, the final observed active-arm covariate distribution was treated as if it had been available at planning; this design was used as a reference method.

\begin{table}[ht]
\centering
\caption{Summary of the Alzheimer's Disease Clinical Study data.}
\label{tab:adcs-summary}
\begin{tabular}{lcc}
\toprule
 & Historical control & Current treatment \\
 & ADC-016 Placebo & ADC-027 DHA \\
\midrule
$N$ & 138 & 175 \\
Age (yrs) & 77.2 (7.9) & 73.4 (13.7) \\
Gender (Male, \%) & 44.9 & 51.4 \\
Education & 13.7 (3.1) & 14.3 (2.8) \\
MMSE & 21.1 (3.6) & 21.1 (3.6) \\
ADAS & 29.0 (12.5) [10.0] & 32.7 (13.2) [10.3] \\
\bottomrule
\end{tabular}

{\footnotesize Values are mean (SD) unless otherwise indicated. Values in brackets refer to the standard deviation of residuals from the linear regression model adjusted for age, gender, education, and MMSE, presented as a reference value for the error variance in the simulation dataset.}
\end{table}

\section{Additional Simulation Settings and Results}

The main text reports results under the linear control mean model. Additional simulations used quadratic and threshold models to evaluate stronger sensitivity to shifts between the planned and true active-arm enrollment distributions. The candidate distribution set was
\[
\mathcal{F}_{X,A}
=
\{N(\mu,1):\mu\in\{0,0.5,1.0,1.5,2.0\}\}.
\]
In addition to the linear model, the following conditional control mean functions were considered:
\[
\theta_0(z)=a_0+a_1z+a_2z^2,\qquad
\theta_0(z)=a_0+a_1I(z>0.5),
\]
with $(a_0,a_1,a_2)=(-1,1,0.5)$, as in the main analysis. In all settings, the true mean difference in the target population was fixed at $\Delta_{\mathrm{alt}}=0.5$, and active-arm outcomes were generated so that
\[
\eta_1^{\mathrm{alt}}
-
\eta_0(F_{Z,A}^{\mathrm{true}})
=
\Delta_{\mathrm{alt}}.
\]
This construction separates the effect of covariate-distribution shifts from differences in effect size.

The variance of the IPTW estimator was estimated by
\[
\hat{\tau}_0^2(F)
=
\frac{
\sum_{j\in H}
w_j(F)^2
\{Y_j^0-\hat\eta_0(F)\}^2
}{
\left(\sum_{j\in H}w_j(F)\right)^2
}.
\]
For the unadjusted method, $s_H^2/n_H$ was used as the variance of the historical-control sample mean. In required sample size calculation and final analysis, the variance components corresponding to $T_N(F)$ were the active-arm variance $1/N$ and the standardized-control variance estimator $\hat{\tau}_0^2(F)$.

In the quadratic and threshold models, the effect of covariate-distribution shifts was more pronounced than in the linear model reported in the main text. In particular, in the threshold model with $\mu_A=0.5$ and $1.0$, the active-arm distribution moved into the region above the threshold, leading to a large change in the standardized control mean from its planning value and substantial power loss for the population-standardized fixed design. The proposed SSR again increased the required sample size and performed close to the oracle design. The conservative SSR tended to require larger sample sizes than the point-estimate SSR and yielded higher power, especially in the threshold model, where the standardized control parameter was sensitive to the score distribution.

The unadjusted fixed design showed high rejection rates in some settings, but this should not be interpreted as high power. When background distributions differ, the historical-control sample mean does not estimate $\eta_0(F_{Z,A}^{\mathrm{true}})$, and the test is conducted against a reference that differs from the null hypothesis in the target population. In the quadratic and threshold models, type I error substantially exceeded the nominal level when $\mu_A$ was large, reflecting misspecification of the control parameter due to background imbalance.

\input{simulation_table_quadratic}

\input{simulation_table_threshold}

\end{document}

%% file: def.tex
\def\x{{\text{\boldmath $x$}}}

\def\X{{\text{\boldmath $X$}}}

%% file: real_data_design_table.tex
\begin{table}[htbp]
\centering
\caption{Sample size determination and re-estimation in the real-data example. The contrast is defined as $\eta_{1,\mathrm{design}}-\hat\eta_0(\hat F_{Z,A,n})$ and does not use current-arm outcomes. Fixed and historical-control designs were determined once at the planning stage, whereas the SSR designs were updated only at the pre-specified review times.}
\label{tab:real_data_design}
\begin{tabular}{lp{0.35\linewidth}lcccc}\toprule
Case & Method & Review $n$ & Contrast & $\widehat{\mathrm{Var}}(\hat\eta_0)$ & $N_{\mathrm{target}}$ & $N_{\mathrm{final}}$ \\
\midrule
Case 1 & Proposed SSR, point estimate & 20 & 5.86 & 1.1358 & 53 &  \\
Case 1 & Proposed SSR, point estimate & 30 & 6.19 & 1.1488 & 46 &  \\
Case 1 & Proposed SSR, point estimate & 40 & 6.40 & 1.1955 & 42 & 42 \\
Case 1 & Proposed SSR, conservative & 20 & 5.86 & 1.1358 & 64 &  \\
Case 1 & Proposed SSR, conservative & 30 & 6.19 & 1.1488 & 53 &  \\
Case 1 & Proposed SSR, conservative & 40 & 6.40 & 1.1955 & 46 & 46 \\
Case 1 & Population-standardized fixed & Planning & 6.39 & 1.1934 & 43 & 43 \\
Case 1 & Unadjusted fixed & Planning & 6.00 & 1.1352 & 49 & 49 \\
\midrule
Case 2 & Proposed SSR, point estimate & 20 & 3.99 & 1.3879 & 232 &  \\
Case 2 & Proposed SSR, point estimate & 40 & 5.14 & 1.4148 & 87 &  \\
Case 2 & Proposed SSR, point estimate & 60 & 5.81 & 1.3738 & 58 & 60 \\
Case 2 & Proposed SSR, conservative & 20 & 3.99 & 1.3879 & 232 &  \\
Case 2 & Proposed SSR, conservative & 40 & 5.14 & 1.4148 & 232 &  \\
Case 2 & Proposed SSR, conservative & 60 & 5.81 & 1.3738 & 126 &  \\
Case 2 & Proposed SSR, conservative & 80 & 6.01 & 1.3229 & 95 & 95 \\
Case 2 & Population-standardized fixed & Planning & 6.19 & 1.2703 & 74 & 74 \\
Case 2 & Unadjusted fixed & Planning & 6.00 & 1.1352 & 49 & 49 \\
\bottomrule
\end{tabular}
\end{table}

%% file: simulation_table_linear.tex
\begin{table}[!htbp]
\centering
\caption{Operating characteristics for the linear outcome model.}
\label{tab:sim-linear}
\begin{tabular}{lllcc}
\hline
$\mu_A$ & Method & $N_{\mathrm{final}}$, mean (SD) & Type I error & Power
\\
\hline
0.0 & Oracle & 35.0 (7.1) & 0.053 & 0.815
\\
0.0 & \textbf{Proposed SSR, point} & 35.2 (7.2) & 0.052 & 0.818
\\
0.0 & \textbf{Proposed SSR, conservative} & 36.1 (7.5) & 0.052 & 0.826
\\
0.0 & Population-standardized fixed & 35.1 (7.1) & 0.052 & 0.815
\\
0.0 & Unadjusted fixed & 35.0 (7.1) & 0.052 & 0.816
\\
0.0 & Max-scenario fixed & 47.3 (38.8) & 0.052 & 0.868
\\
\hline
0.5 & Oracle & 36.0 (9.8) & 0.055 & 0.811
\\
0.5 & \textbf{Proposed SSR, point} & 36.2 (10.0) & 0.056 & 0.821
\\
0.5 & \textbf{Proposed SSR, conservative} & 41.9 (13.5) & 0.054 & 0.865
\\
0.5 & Population-standardized fixed & 27.5 (4.7) & 0.053 & 0.708
\\
0.5 & Unadjusted fixed & 27.5 (4.7) & 0.060 & 0.811
\\
0.5 & Max-scenario fixed & 69.1 (71.3) & 0.058 & 0.926
\\
\hline
1.0 & Oracle & 49.2 (53.0) & 0.056 & 0.815
\\
1.0 & \textbf{Proposed SSR, point} & 41.8 (37.7) & 0.056 & 0.791
\\
1.0 & \textbf{Proposed SSR, conservative} & 47.1 (46.6) & 0.055 & 0.812
\\
1.0 & Population-standardized fixed & 20.2 (0.7) & 0.059 & 0.570
\\
1.0 & Unadjusted fixed & 20.2 (0.7) & 0.142 & 0.879
\\
1.0 & Max-scenario fixed & 49.3 (52.9) & 0.055 & 0.815
\\
\hline
\end{tabular}
\end{table}

%% file: simulation_table_quadratic.tex
\begin{table}[!htbp]
\centering
\caption{Operating characteristics for the quadratic outcome model.}
\label{tab:sim-quadratic}
\begin{tabular}{lllcc}
\hline
$\mu_A$ & Method & $N_{\mathrm{final}}$, mean (SD) & Type I error & Power
\\
\hline
0.0 & Oracle & 35.0 (7.0) & 0.050 & 0.806
\\
0.0 & \textbf{Proposed SSR, point} & 35.1 (7.1) & 0.048 & 0.809
\\
0.0 & \textbf{Proposed SSR, conservative} & 36.1 (7.4) & 0.049 & 0.819
\\
0.0 & Population-standardized fixed & 35.0 (7.0) & 0.049 & 0.806
\\
0.0 & Unadjusted fixed & 35.0 (7.0) & 0.050 & 0.809
\\
0.0 & Max-scenario fixed & 47.1 (39.2) & 0.047 & 0.864
\\
\hline
0.5 & Oracle & 36.2 (8.9) & 0.054 & 0.814
\\
0.5 & \textbf{Proposed SSR, point} & 36.2 (10.8) & 0.056 & 0.821
\\
0.5 & \textbf{Proposed SSR, conservative} & 44.2 (14.4) & 0.056 & 0.881
\\
0.5 & Population-standardized fixed & 24.9 (3.9) & 0.052 & 0.673
\\
0.5 & Unadjusted fixed & 24.9 (3.9) & 0.068 & 0.815
\\
0.5 & Max-scenario fixed & 92.3 (96.1) & 0.058 & 0.950
\\
\hline
1.0 & Oracle & 54.6 (68.2) & 0.052 & 0.809
\\
1.0 & \textbf{Proposed SSR, point} & 42.8 (46.1) & 0.055 & 0.779
\\
1.0 & \textbf{Proposed SSR, conservative} & 50.6 (58.3) & 0.052 & 0.804
\\
1.0 & Population-standardized fixed & 20.0 (0.0) & 0.060 & 0.548
\\
1.0 & Unadjusted fixed & 20.0 (0.0) & 0.271 & 0.943
\\
1.0 & Max-scenario fixed & 54.2 (66.9) & 0.052 & 0.811
\\
\hline
\end{tabular}
\end{table}

%% file: simulation_table_threshold.tex
\begin{table}[!htbp]
\centering
\caption{Operating characteristics for the threshold outcome model.}
\label{tab:sim-threshold}
\begin{tabular}{lllcc}
\hline
$\mu_A$ & Method & $N_{\mathrm{final}}$, mean (SD) & Type I error & Power
\\
\hline
0.0 & Oracle & 34.9 (7.0) & 0.043 & 0.814
\\
0.0 & \textbf{Proposed SSR, point} & 35.0 (7.1) & 0.043 & 0.814
\\
0.0 & \textbf{Proposed SSR, conservative} & 36.0 (7.4) & 0.045 & 0.825
\\
0.0 & Population-standardized fixed & 34.9 (7.0) & 0.043 & 0.815
\\
0.0 & Unadjusted fixed & 34.9 (7.0) & 0.043 & 0.814
\\
0.0 & Max-scenario fixed & 47.0 (38.3) & 0.047 & 0.867
\\
\hline
0.5 & Oracle & 37.1 (10.2) & 0.063 & 0.805
\\
0.5 & \textbf{Proposed SSR, point} & 37.5 (16.0) & 0.061 & 0.834
\\
0.5 & \textbf{Proposed SSR, conservative} & 52.7 (22.4) & 0.060 & 0.933
\\
0.5 & Population-standardized fixed & 20.3 (1.0) & 0.064 & 0.584
\\
0.5 & Unadjusted fixed & 20.3 (1.0) & 0.136 & 0.873
\\
0.5 & Max-scenario fixed & 192.8 (157.0) & 0.059 & 0.992
\\
\hline
1.0 & Oracle & 49.7 (49.9) & 0.059 & 0.807
\\
1.0 & \textbf{Proposed SSR, point} & 40.6 (37.6) & 0.064 & 0.772
\\
1.0 & \textbf{Proposed SSR, conservative} & 47.2 (44.6) & 0.063 & 0.799
\\
1.0 & Population-standardized fixed & 20.0 (0.0) & 0.059 & 0.564
\\
1.0 & Unadjusted fixed & 20.0 (0.0) & 0.389 & 0.972
\\
1.0 & Max-scenario fixed & 50.0 (51.6) & 0.060 & 0.802
\\
\hline
\end{tabular}
\end{table}

%% file: main.bib
@article{makuch1980sample,
  title={Sample size considerations for non-randomized comparative studies},
  author={Makuch, Robert W and Simon, Richard M},
  journal={Journal of Chronic Diseases},
  volume={33},
  number={3},
  pages={175--181},
  year={1980},
  doi={10.1016/0021-9681(80)90017-X},
  publisher={Elsevier}
}

@article{zhang2010calculating,
  title={Calculating sample size in trials using historical controls},
  author={Zhang, Song and Cao, Jing and Ahn, Chul},
  journal={Clinical Trials},
  volume={7},
  number={4},
  pages={343--353},
  year={2010},
  doi={10.1177/1740774510373629},
  publisher={SAGE Publications Sage UK: London, England}
}

@article{loiseau2022external,
  title={External control arm analysis: an evaluation of propensity score approaches, G-computation, and doubly debiased machine learning},
  author={Loiseau, Nicolas and Trichelair, Paul and He, Maxime and Andreux, Mathieu and Zaslavskiy, Mikhail and Wainrib, Gilles and others},
  journal={BMC Medical Research Methodology},
  volume={22},
  number={1},
  pages={335},
  year={2022},
  doi={10.1186/s12874-022-01799-z},
  publisher={Springer}
}

@article{schoenfeld2019design,
  title={Design and analysis of a clinical trial using previous trials as historical control},
  author={Schoenfeld, David Alan and Finkelstein, Dianne M and Macklin, Eric and Zach, Neta and Ennist, David L and Taylor, Albert A and others},
  journal={Clinical Trials},
  volume={16},
  number={5},
  pages={531--538},
  year={2019},
  doi={10.1177/1740774519858914},
  publisher={SAGE Publications Sage UK: London, England}
}

@article{o2002sample,
  title={Sample size calculation for a historically controlled clinical trial with adjustment for covariates},
  author={O'Malley, A James and Normand, Sharon-Lise T and Kuntz, Richard E},
  journal={Journal of Biopharmaceutical Statistics},
  volume={12},
  number={2},
  pages={227--247},
  year={2002},
  doi={10.1081/BIP-120015745},
  publisher={Taylor \& Francis}
}

@article{pocock1976historical,
  title={The combination of randomized and historical controls in clinical trials},
  author={Pocock, Stuart J},
  journal={Journal of Chronic Diseases},
  volume={29},
  number={3},
  pages={175--188},
  year={1976},
  doi={10.1016/0021-9681(76)90044-8},
  publisher={Elsevier}
}

@article{viele2014historical,
  title={Use of historical control data for assessing treatment effects in clinical trials},
  author={Viele, Kert and Berry, Scott and Neuenschwander, Beat and Amzal, Billy and Chen, Fang and Enas, Nathan and others},
  journal={Pharmaceutical Statistics},
  volume={13},
  number={1},
  pages={41--54},
  year={2014},
  doi={10.1002/pst.1589},
  publisher={Wiley}
}

@article{thorlund2020synthetic,
  title={Synthetic and external controls in clinical trials: A primer for researchers},
  author={Thorlund, Kristian and Dron, Louis and Park, Jay JH and Mills, Edward J},
  journal={Clinical Epidemiology},
  volume={12},
  pages={457--467},
  year={2020},
  doi={10.2147/CLEP.S242097},
  publisher={Dove Medical Press}
}

@article{seeger2020external,
  title={Methods for external control groups for single arm trials or long-term uncontrolled extensions to randomized clinical trials},
  author={Seeger, John D and Davis, Kourtney J and Iannacone, Michelle R and Zhou, Wei and Dreyer, Nancy and Winterstein, Almut G and others},
  journal={Pharmacoepidemiology and Drug Safety},
  volume={29},
  number={11},
  pages={1382--1392},
  year={2020},
  doi={10.1002/pds.5141},
  publisher={Wiley}
}

@article{burcu2020rwe,
  title={Real-world evidence to support regulatory decision-making for medicines: Considerations for external control arms},
  author={Burcu, Mehmet and Dreyer, Nancy A and Franklin, Jessica M and Blum, Michael D and Critchlow, Cathy W and Perfetto, Eleanor M and others},
  journal={Pharmacoepidemiology and Drug Safety},
  volume={29},
  number={10},
  pages={1228--1235},
  year={2020},
  doi={10.1002/pds.4975},
  publisher={Wiley}
}

@article{mishraKalyani2022external,
  title={External control arms in oncology: current use and future directions},
  author={Mishra-Kalyani, Pallavi S and Amiri Kordestani, Laleh and Rivera, Donna R and Singh, Harpreet and Ibrahim, Amna and DeClaro, Richard A and others},
  journal={Annals of Oncology},
  volume={33},
  number={4},
  pages={376--383},
  year={2022},
  doi={10.1016/j.annonc.2021.12.015},
  publisher={Elsevier}
}

@article{rosenbaumRubin1983,
  title={The central role of the propensity score in observational studies for causal effects},
  author={Rosenbaum, Paul R and Rubin, Donald B},
  journal={Biometrika},
  volume={70},
  number={1},
  pages={41--55},
  year={1983},
  doi={10.1093/biomet/70.1.41},
  publisher={Oxford University Press}
}

@article{dahabreh2020transport,
  title={Extending inferences from a randomized trial to a new target population},
  author={Dahabreh, Issa J and Robertson, Sarah E and Steingrimsson, Jon A and Stuart, Elizabeth A and Hernan, Miguel A},
  journal={Statistics in Medicine},
  volume={39},
  number={14},
  pages={1999--2014},
  year={2020},
  doi={10.1002/sim.8426},
  publisher={Wiley}
}

@article{gouldShih1992,
  title={Sample size re-estimation without unblinding for normally distributed outcomes with unknown variance},
  author={Gould, A Lawrence and Shih, Weichung Joseph},
  journal={Communications in Statistics - Theory and Methods},
  volume={21},
  number={10},
  pages={2833--2853},
  year={1992},
  doi={10.1080/03610929208830947},
  publisher={Taylor \& Francis}
}

@article{gould2001ssr,
  title={Sample size re-estimation: Recent developments and practical considerations},
  author={Gould, A Lawrence},
  journal={Statistics in Medicine},
  volume={20},
  number={17-18},
  pages={2625--2643},
  year={2001},
  doi={10.1002/sim.733},
  publisher={Wiley}
}

@article{proschanHunsberger1995,
  title={Designed extension of studies based on conditional power},
  author={Proschan, Michael A and Hunsberger, Sally A},
  journal={Biometrics},
  volume={51},
  number={4},
  pages={1315--1324},
  year={1995},
  doi={10.2307/2533262},
  publisher={International Biometric Society}
}

@article{kojima2026hybrid,
  title={Sample size re-estimation in blinded hybrid-control design using inverse probability weighting},
  author={Kojima, Masahiro and Orihara, Shunichiro and Hanada, Keisuke and Ohigashi, Tomohiro},
  journal={Statistics in Medicine},
  volume={45},
  number={3-5},
  pages={e70429},
  year={2026},
  doi={10.1002/sim.70429},
  publisher={Wiley}
}

@article{xuFriede2026monitoring,
  title={A note on blinded continuous monitoring for continuous outcomes},
  author={Xu, Long-Hao and Friede, Tim},
  journal={Statistics \& Probability Letters},
  volume={228},
  pages={110575},
  year={2026},
  doi={10.1016/j.spl.2025.110575},
  publisher={Elsevier}
}

@misc{maeda2026blinded,
  title={Blinded sample size re-estimation accounting for uncertainty in mid-trial estimation},
  author={Maeda, Hirotada and Hattori, Satoshi and Friede, Tim},
  year={2026},
  eprint={2602.03218},
  archivePrefix={arXiv},
  primaryClass={stat.ME},
  doi={10.48550/arXiv.2602.03218}
}

@misc{ichE9R1,
  title={{ICH E9(R1)} Statistical Principles for Clinical Trials: Addendum: Estimands and Sensitivity Analysis in Clinical Trials},
  author={{ICH}},
  year={2019},
  note={Step 5 guideline. Available at \url{https://www.pmda.go.jp/files/000269157.pdf}. Accessed July 22, 2026}
}

@article{stuart2011generalizability,
  title={The use of propensity scores to assess the generalizability of results from randomized trials},
  author={Stuart, Elizabeth A and Cole, Stephen R and Bradshaw, Catherine P and Leaf, Philip J},
  journal={Journal of the Royal Statistical Society: Series A (Statistics in Society)},
  volume={174},
  number={2},
  pages={369--386},
  year={2011},
  doi={10.1111/j.1467-985X.2010.00673.x},
  publisher={Wiley}
}

@article{coleHernan2008weights,
  title={Constructing inverse probability weights for marginal structural models},
  author={Cole, Stephen R and Hernan, Miguel A},
  journal={American Journal of Epidemiology},
  volume={168},
  number={6},
  pages={656--664},
  year={2008},
  doi={10.1093/aje/kwn164},
  publisher={Oxford University Press}
}

@article{austinStuart2015iptw,
  title={Moving towards best practice when using inverse probability of treatment weighting {(IPTW)} using the propensity score to estimate causal treatment effects in observational studies},
  author={Austin, Peter C and Stuart, Elizabeth A},
  journal={Statistics in Medicine},
  volume={34},
  number={28},
  pages={3661--3679},
  year={2015},
  doi={10.1002/sim.6607},
  publisher={Wiley}
}

@book{chow2017sample,
  title={Sample Size Calculations in Clinical Research},
  author={Chow, Shein-Chung and Shao, Jun and Wang, Hansheng and Lokhnygina, Yuliya},
  edition={3},
  year={2017},
  publisher={Chapman and Hall/CRC},
  address={Boca Raton},
  doi={10.1201/9781315183084}
}

@article{friedeKieser2006internal,
  title={Sample size recalculation in internal pilot study designs: A review},
  author={Friede, Tim and Kieser, Meinhard},
  journal={Biometrical Journal},
  volume={48},
  number={4},
  pages={537--555},
  year={2006},
  doi={10.1002/bimj.200510238},
  publisher={Wiley}
}

@book{proschanLanWittes2006monitoring,
  title={Statistical Monitoring of Clinical Trials: A Unified Approach},
  author={Proschan, Michael A and Lan, K K Gordon and Wittes, Janet Turk},
  year={2006},
  publisher={Springer},
  address={New York},
  doi={10.1007/978-0-387-44970-8}
}

@article{aisen2008bvitamin,
  title={High-dose {B} vitamin supplementation and cognitive decline in {Alzheimer} disease: A randomized controlled trial},
  author={Aisen, Paul S and Schneider, Lon S and Sano, Mary and Diaz-Arrastia, Ramon and van Dyck, Christopher H and Weiner, Myron F and others},
  journal={JAMA},
  volume={300},
  number={15},
  pages={1774--1783},
  year={2008},
  doi={10.1001/jama.300.15.1774}
}

@article{quinn2010dha,
  title={Docosahexaenoic acid supplementation and cognitive decline in {Alzheimer} disease: A randomized trial},
  author={Quinn, Joseph F and Raman, Rema and Thomas, Ronald G and Yurko-Mauro, Karin and Nelson, Edward B and Van Dyck, Christopher and others},
  journal={JAMA},
  volume={304},
  number={17},
  pages={1903--1911},
  year={2010},
  doi={10.1001/jama.2010.1510}
}

@article{qi2022historical,
  title={Incorporating historical controls in clinical trials with longitudinal outcomes using the modified power prior},
  author={Qi, Hongchao and Rizopoulos, Dimitris and Lesaffre, Emmanuel and van Rosmalen, Joost},
  journal={Pharmaceutical Statistics},
  volume={21},
  number={5},
  pages={818--834},
  year={2022},
  doi={10.1002/pst.2195},
  publisher={Wiley}
}
